\documentclass[11pt,letterpaper]{article}

\usepackage[margin=1in]{geometry}
\usepackage[T1]{fontenc}
\usepackage{lmodern}
\usepackage{microtype}

\usepackage{amsmath,amsfonts,amssymb}
\usepackage{amsthm}
\usepackage{mathtools}
\usepackage{physics}
\usepackage{orcidlink}
\usepackage{array}
\usepackage{multirow}
\usepackage{graphicx}
\usepackage{float}
\usepackage{textcomp}
\usepackage{url}
\usepackage{cite}
\usepackage{authblk}
\usepackage[font=normalsize]{caption}
\newenvironment{IEEEkeywords}%
{\par\medskip\noindent\textbf{Keywords: }}%
{\par}
\newcommand{\appendices}{\appendix}

\newcolumntype{C}[1]{>{\centering\arraybackslash}m{#1}}

\newcommand{\Id}{{\rm 1\hspace{-0.9mm}l}}

\newcommand{\PP}{\ensuremath{\mathcal{P}}}
\newcommand{\QQ}{\ensuremath{\mathcal{Q}}}

\renewcommand{\AA}{\ensuremath{\mathcal{A}}}

\newcommand{\proj}[1]{\ensuremath{\ketbra{#1}{#1}}}
\newcommand{\ketV}[1]{\ensuremath{|#1\rangle\!\rangle}}
\newcommand{\braV}[1]{\ensuremath{\langle\!\langle#1|}}
\newcommand{\ketbraV}[2]{\ensuremath{\ketV{#1}\!\braV{#2}}}
\newcommand{\projV}[1]{\ensuremath{\ketbraV{#1}{#1}}}

\newtheorem{theorem}{Theorem}
\newtheorem{corollary}[theorem]{Corollary}
\newtheorem{proposition}[theorem]{Proposition}

\begin{document}

\title{Noise Resilience of Quantum Key Distribution Protocols Secured Against Independent Attacks With One-Way Communication}

\author[1,2]{Adam B\'ilek\,\orcidlink{0000-0001-8380-5338}\thanks{Corresponding author: Adam B\'ilek (e-mail: adam.bilek@vsb.cz).}}
\author[2]{Ryszard Kukulski\,\orcidlink{0000-0002-9171-1734}}
\author[2]{Paulina Lewandowska\,\orcidlink{0000-0003-1564-7782}}
\author[3]{{\L}ukasz Pawela\,\orcidlink{0000-0002-0476-7132}}
\author[3]{Zbigniew Pucha{\l}a\,\orcidlink{0000-0002-4739-0400}}
\affil[1]{Department of Applied Mathematics, Faculty of Electrical Engineering and Computer Science, VSB--Technical University of Ostrava, 17.~listopadu 2172/15, 708 33 Ostrava, Czech Republic}
\affil[2]{IT4Innovations, VSB--Technical University of Ostrava, 17.~listopadu 2172/15, 708 33 Ostrava, Czech Republic}
\affil[3]{Institute of Theoretical and Applied Informatics, Polish Academy of Sciences, Ba{\l}tycka 5, 44-100 Gliwice, Poland}

\date{10 July 2026}

\markboth{IEEE Transactions on Information Theory,~Vol.~XX, No.~XX, Month~Year}%
{B\'ilek \MakeLowercase{\textit{et al.}}: Noise Resilience of QKD Protocols Secured Against Independent Attacks With One-Way Communication}

\maketitle

\begin{abstract}
    We investigate the resilience to noise of single-qubit quantum key distribution (QKD) protocols in the scenario of security against independent eavesdropping attacks and key distillation based on one-way classical communication. To this end, we introduce a noise-based metric that quantifies the efficiency of QKD protocols. Within this framework, we analyze the maximal noise levels that allow Alice and Bob to asymptotically establish a secure secret key. Using this assumption, we compare the noise tolerance of general single-qubit QKD protocols, in particular the BB84, B92, E91, and six-state protocols.
    Our main result determines the noise level threshold for QKD allowing one to distill an asymptotically secure secret key. Additionally, we demonstrate that the six-state protocol achieves the greatest resistance to noise while simultaneously yielding a higher post-selection efficiency than the other analyzed single-qubit protocols, confirming its robustness within the considered security model.  Finally, we perform an analysis of the proposed noise-based metric and the conventional quantum bit error rate (QBER) metric.
\end{abstract}

\begin{IEEEkeywords}
    Quantum key distribution, independent attacks, noise resilience, one-way communication, six-state protocol
\end{IEEEkeywords}

\section{Introduction}\label{sec:Introduction}
Quantum key distribution (QKD) enables two distant parties to establish a shared secret key with security guaranteed by the laws of quantum mechanics. Since the introduction of the BB84~\cite{bb84,bennett2014quantum}, E91~\cite{ekert1991quantum}, and B92~\cite{Bennett_1992} protocols, QKD has become one of the most mature applications of quantum information science. Its security relies on fundamental quantum principles, including the impossibility of perfectly cloning unknown quantum states and the disturbance inevitably introduced by quantum measurements.

While ideal QKD protocols provide information-theoretic security, practical implementations are inevitably affected by imperfections in state preparation, quantum transmission, and measurement. These imperfections reduce the achievable secret key rate and may compromise security if not properly taken into account~\cite{slutsky1998security,lutkenhaus2000security}. Consequently, practical QKD systems are typically evaluated using performance indicators such as the quantum bit error rate (QBER) and the secret key rate (SKR), which quantify empirical errors and the rate of distilled secret key, respectively.

The influence of eavesdropping strategies and implementation imperfections on QKD security has also been extensively investigated.
Early works analyzed optimal individual attacks against the BB84~\cite{fuchs1997optimal} and six-state~\cite{bruss1998optimal, bechmann1999incoherent} protocols, while subsequent studies considered increasingly realistic implementation assumptions and practical attack scenarios~\cite{waks2006security,shapiro2006attacking}. Comprehensive treatments of practical QKD security are given in~\cite{scarani2009security}, and recent research has further extended these analyses to realistic device models and modern security frameworks~\cite{ramanathan2023security,roy2024security}. Information-theoretic security proofs with one-way classical post-processing, together with the corresponding QBER thresholds for the BB84 and six-state protocols, were established in~\cite{renner2005information}.

Despite these advances, a fundamental question remains barely explored: how should the overall physical noise affecting a QKD implementation be quantified? In this work, we challenge the QBER measure and show that it does not necessarily reflect the actual level of physical noise experienced by the quantum communication process. Consequently, empirical error rates alone cannot always be interpreted as a direct measure of protocol noise resilience.

To address this problem, we introduce a noise metric, denoted by $\epsilon$, which quantifies the cumulative deviation of a practical implementation from its ideal realization. The proposed metric captures imperfections originating from all stages of a QKD protocol, including state preparation, quantum transmission, measurement, and information leakage caused by an eavesdropper. Building upon this metric, we define the corresponding noise threshold $\Upsilon$, representing the maximum admissible physical noise under which secure key generation remains possible. Beyond introducing the $\epsilon$-based framework, we demonstrate that QBER alone is not always a reliable indicator of protocol robustness. In particular, we construct a family of prepare-and-measure protocols employing highly non-orthogonal quantum states that tolerate remarkably large empirical error rates in the generated bitstrings while remaining subject to only a small amount of physical noise, as quantified by the proposed metric $\Upsilon$. This example illustrates that empirical bit errors and physical implementation noise are fundamentally distinct quantities, highlighting the need for protocol-independent measures of noise resilience. Throughout most of the work, we consider a general class of prepare-and-measure QKD protocols operating in independent rounds, with arbitrary independent attacks allowed and one-way communication. Within this framework, we formulate an optimization problem over all admissible protocols and characterize the highest noise threshold compatible with a positive asymptotic secret key rate.

The remainder of this paper is organized as follows. In Section~\ref{sec:problem}, we formulate the problem and introduce the theoretical framework underlying our analysis. Section~\ref{sec:results} presents the main analytical and numerical results on the noise resilience of QKD protocols. In particular, Subsection~\ref{sub-noise} derives the fundamental threshold of the noise resilience achievable by general QKD protocols, while Subsection~\ref{sub-qber}  presents the motivation of our work by examining  the discrepancy between QBER and  the $\Upsilon$ measure. Section~\ref{sec:discuss} concludes the paper and outlines directions for future research.

\section{Problem Definition}\label{sec:problem}

Let us assume the following scenario. Alice and Bob define a quantum key distribution protocol $\mathcal{Q}$ consisting of a sequence of independent rounds. In each round, Alice prepares and sends a qubit state to Bob. Subsequently, the parties may perform public classical communication and private processing of their local quantum registers. At the end of the round, they either accept the outcome (event $S$) or discard it. By
$\mathcal{P}(S)$ we will
denote the corresponding probability of accepting the round. Whenever the $i^\mathrm{th}$ round is accepted, Alice and Bob obtain classical bits $a_i$ and $b_i$, respectively. After many repetitions, they accumulate two bitstrings, $(a_i)_i$ and $(b_i)_i$. The protocol uses random bit flipping to enforce uniform marginals and employs one-way communication to distill a secure secret key via error correction and privacy amplification.

We assume that the quantum communication is intercepted by an eavesdropper, Eve, who is restricted to arbitrary independent attacks. Within each communication round, Eve's attack is fully general: she may store a quantum system in her quantum memory and postpone its measurement until after the public communication associated with that round. Based on the intercepted quantum communication and the information overheard from public communication,
Eve generates her bitstring $(e_i)_i$. In that scenario, each round can be treated independently, and then the asymptotic secret key rate (SKR) can be analyzed utilizing the wiretap channel given by
\cite{DevetakWinter2005}
\begin{equation}
    \mathrm{SKR} =
    \max_{P\in \{A,B\}}
    \left[
        I(A;B)-I(P;E)
        \right].
\end{equation}

The maximization reflects the choice of reconciliation direction. Due to the symmetrization procedure~\cite{Shannon1948,CoverThomas2006}, the above expression simplifies to

\begin{equation}
    \mathrm{SKR} =
    \max_{P\in\{A,B\}}
    \left[
        H_2 \left(\PP(P=E|S)\right) -
        H_2 \left(\PP(A=B|S)\right)
        \right],
\end{equation}
where $H_2$ denotes the binary entropy function \cite{Shannon1948} and $\PP(P_1=P_2|S)$ is the probability that the accepted bits of parties $P_1$ and $P_2$ coincide in a single round.

The objective of this work is to quantify the noise resilience of QKD protocols within the provided framework. The intensity of noise will be parametrized by $\epsilon\in[0,1]$. As discussed in the introduction, $\epsilon$ captures all deviations from the ideal implementation, including state-preparation of Alice, Bob's measurement, noisy quantum communication line between Alice and Bob and, finally, the effects of independent eavesdropping attacks. Collectively, these effects are described by a quantum channel $\Phi_\epsilon$ satisfying
\begin{equation}\label{fidelity}
    \frac14
    \braV{\Id}
    J(\Phi_\epsilon)
    \ketV{\Id}
    \ge
    1-\epsilon,
\end{equation}
where $J(\Phi_\epsilon)$ denotes the Choi--Jamio{\l}kowski representation of the channel $\Phi_\epsilon$ \cite{watrous2018theory} and $\ketV{\Id}$ denotes the vectorization of the identity operator. Thus, $\Phi_\epsilon$ is constrained to be at most $\epsilon$ away from the identity channel as measured by the average entanglement fidelity \cite{Nielsen2002, horodecki1999general}.
The disturbances induced by $\Phi_\epsilon$ are observable to Alice and Bob through the quantum bit error rate (QBER) \cite{gisin2002quantum, bennett2014quantum}, or equivalently through the quantity $\PP(A=B|S)$. Whenever $\epsilon=0$, there is no disturbance to $\QQ$ as $\Phi_0 = \Id$.  In this noiseless scenario, we assume two properties: Alice and Bob always agree correctly on the value of the key, that is $\PP(A=B|S)=1$; Eve has no knowledge about the secret value of key, that is $\PP(A=E|S)=1/2$.

We say that a protocol $\mathcal{Q}$ tolerates a noise level $\epsilon$ if it enables asymptotic secret key generation secure against independent eavesdropping attacks using one-way classical communication. To formalize this notion, we introduce two functions, $x_{\mathcal{Q}}(\epsilon)$ and $y_{\mathcal{Q}}(\epsilon)$. The function $x_{\mathcal{Q}}(\epsilon)$ characterizes the worst-case acceptable rate for the QKD protocol $\mathcal{Q}$ with intensity of noise $\epsilon$ as
\begin{equation}
    x_{\mathcal{Q}}(\epsilon) = \max(\min_{\Phi_\epsilon} \PP(A=B|S), 1/2).
\end{equation}
In particular, it holds $x_\QQ(0) = 1$ for any $\QQ$. Observe that in the anti-correlation regime $\PP(A=B|S) < 1/2$, Alice and Bob could raise their compatibility by using public communication up to $1 - \PP(A=B|S) > 1/2$. Hence, the maximum in the definition guarantees that if the noise channel $\Phi_\epsilon$ can achieve $\PP(A=B|S) < 1/2$, from convexity it would be able to find the noise channel such that $x_{\mathcal{Q}}(\epsilon) = 1/2$, which is the most destructive interference. The function $y_{\mathcal{Q}}(\epsilon)$ quantifies the maximal information that an eavesdropper employing an independent attack can obtain while preserving at least the agreement level $x_{\mathcal{Q}}(\epsilon)$
\begin{equation}
    y_{\mathcal{Q}}(\epsilon) =
    \min_{P\in \{A,B\} }
    \max_{E}
    \left\{
    \PP(P=E|S):
    \PP(A=B|S)\ge x_{\mathcal{Q}}(\epsilon)
    \right\},
\end{equation}
where the maximization is taken over any single-round attack strategy of Eve, that is, over all combinations of an intercepting quantum channel and conditional measurements used to produce her bit $E$. A formal parametrization of such strategies is provided in Appendix~\ref{proof-1-1}. In particular, it holds $y_\QQ(0) = 1/2$ for any $\QQ$. Within our framework, the protocol $\mathcal{Q}$ achieves a positive asymptotic secret key rate SKR$>0$ if and only if
\begin{equation}
    y_{\mathcal{Q}}(\epsilon)
    <
    x_{\mathcal{Q}}(\epsilon).
\end{equation}
This motivates the definition of the maximal tolerable noise level of protocol $\mathcal{Q}$,
\begin{equation}
    \Upsilon_{\mathcal{Q}}
    =
    \sup
    \left\{
    \epsilon\in[0,1]:
    y_{\mathcal{Q}}(\epsilon)
    <
    x_{\mathcal{Q}}(\epsilon)
    \right\}.
\end{equation}
Finally, our ultimate goal is to determine the highest noise level threshold achievable by any QKD protocol $\mathcal{Q}$ against independent eavesdropping attacks. We therefore define the noise threshold achievable by any $\mathcal{Q}$ as
\begin{equation}
    \Upsilon = \sup_{\mathcal{Q}}
    \Upsilon_{\mathcal{Q}}.
\end{equation}

\section{Results}\label{sec:results}
This section presents the main results of this work. The proofs of theorems are presented in the Appendix.

\subsection{Noise resilience}\label{sub-noise}

\begin{theorem}\label{thm-1-1}
    Under the assumptions of Section~\ref{sec:problem}, the noise level threshold for QKD protected against independent attacks with the usage of one-way communication is
    \begin{equation}
        \Upsilon=\frac14.
    \end{equation}
\end{theorem}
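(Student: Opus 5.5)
The proof has two halves: an achievability bound $\Upsilon\ge 1/4$ and a converse $\Upsilon\le 1/4$.

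\textbf{Achievability.} I would pick a concrete protocol, most naturally the six-state protocol (or BB84), and compute $x_\QQ(\epsilon)$ and $y_\QQ(\epsilon)$ explicitly.

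For the six-state protocol the basis set is rotationally symmetric. Twirling over the single-qubit Clifford group therefore lets me restrict to depolarizing channels $\Phi_\epsilon(\rho)=(1-p)\rho+p\,\Id/2$ when computing the worst-case agreement. The entanglement fidelity of such a channel is $1-3p/4$, so $p=4\epsilon/3$. The worst-case agreement is then $x_\QQ(\epsilon)=1-p/2=1-2\epsilon/3$, and more generally $1-\mathrm{QBER}$ with the QBER an affine function of $\epsilon$.

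For the eavesdropper I would use the known optimal individual-attack (Fuchs--Gisin / Bru\ss) curves, which give Eve's guessing probability as a function of the QBER, optimised over one-way reconciliation direction. The task is then to solve $y_\QQ(\epsilon)=x_\QQ(\epsilon)$ and show the crossing lies at $\epsilon\to 1/4$ from below (the supremum need only be approached).

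If the six-state crossing does not land exactly at $1/4$, I would instead look for a protocol family $\QQ_n$ whose thresholds converge to $1/4$. The abstract's remark about highly non-orthogonal states suggests such families exist.

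\textbf{Converse.} I would exhibit, for every $\QQ$ and every $\epsilon\ge 1/4$, an attack channel with fidelity at least $1-\epsilon$ that kills the key. The natural candidate is the intercept--resend (measure-and-prepare) channel reaching fidelity $3/4$. An example is Eve measuring in a suitable basis and resending the outcome, or, better, a symmetric universal cloning map whose output to Bob has entanglement fidelity exactly $3/4$.

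The key point is that the universal $1\to2$ symmetric cloner gives Bob and Eve identical reduced channels with fidelity $3/4$, i.e.\ shrinking factor $2/3$. By symmetry, Eve can then guess $A$ exactly as well as Bob does, so $\PP(A=E|S)\ge\PP(A=B|S)$ for every post-selection and every measurement Bob uses: Eve simply applies Bob's own post-processing to her clone.

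Care is needed with the order of quantifiers. $x_\QQ$ is defined by the worst channel, while $y_\QQ$ maximises Eve's guess subject to $\PP(A=B|S)\ge x_\QQ$. The cloning attack leaves Bob's marginal channel inside the admissible set, and its agreement is at least the minimum, so it is feasible. Eve then attains at least $x$, giving $y_\QQ\ge x_\QQ$ for the $P=A$ direction.

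For the reverse direction $P=B$, the same symmetry yields $\PP(B=E|S)\ge\PP(A=B|S)$. The argument is that Eve's clone is exchangeable with Bob's, so $E$ matches $B$ at least as often as $A$ matches $B$. Making this rigorous is the main obstacle, because Bob's post-selection and measurements are arbitrary. I would first try an argument via exchangeability of the joint state of Bob's and Eve's systems, combined with the convexity remark already used in defining $x_\QQ$. Failing that, I would bound via monotonicity of the mutual information under the symmetric extension, showing $I(A;B)\le I(A;E)$ whenever the state is symmetrically extendible. At $\epsilon=1/4$ this extendibility is guaranteed by the $3/4$-fidelity cloner.
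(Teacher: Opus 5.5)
\textbf{The converse has a genuine gap.} It rests on ``Eve simply applies Bob's own post-processing to her clone,'' and that step fails whenever Bob's accept/discard decision depends on his measurement outcome.

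The $B\leftrightarrow E$ symmetry of the cloner only equates unconditional statistics. If Eve runs Bob's instrument on her clone, she gets $\PP(A=E\wedge S_E)=\PP(A=B\wedge S_B)$, where $S_E$ is her own simulated acceptance. The criterion, however, needs $\PP(A=E|S_B)$, conditioned on Bob's actual acceptance. That event is correlated with Bob's qubit and hence with her clone, and Eve cannot discard rounds Bob keeps.

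Here is a concrete failure. Let Alice send $\ket{0}$ with probability $1/3$ and $\ket{1}$ with probability $2/3$. Let Bob measure $Z$ and keep outcome $1$ only half the time, i.e.\ $\Omega_0=\proj{0}$, $\Omega_1=\frac12\proj{1}$. This protocol is perfectly correct and balanced at $\epsilon=0$.
\begin{itemize}
\item Under the symmetric cloner (Bloch shrinking factor $2/3$, entanglement fidelity $3/4$), Bob gets $\PP(A=B|S)=4/5$.
\item Eve applying Bob's POVM to her clone, and guessing at random on his discard outcome, gets only $\PP(A=E|S)=37/50<4/5$.
\item Plain $Z$ on the clone gives $21/25$, so the theorem survives here. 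It survives only because Eve's measurement is actually optimized against Bob's filter, which your symmetry argument does not do.
\end{itemize}

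The $P=B$ direction is worse. Exchangeability of $B$ and $E$ says nothing about $\PP(B=E|S)$: both measuring $Z$ on the cloner output gives $\PP(B=E)=2/3<5/6=\PP(A=B)$. You would need an Alice-side extension, and Alice's biased or non-orthogonal preparation breaks it in the same way. The symmetric-extension fallback only excludes key under forward communication without outcome-dependent filtering. Bob's accept/discard announcement is backward communication, and local filtering destroys the symmetry of the extension.

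\textbf{What the paper does instead.} The missing ingredient is an explicit optimization that handles this filtering. The paper's route is:
\begin{enumerate}
\item Reduce to prepare-and-measure form with index matching.
\item Clifford-twirl, which can only raise $\Upsilon_\QQ$, so every channel is depolarizing, $J_q$ with $q\le\frac43\epsilon$.
\item Give Eve the full purification of $J_{q_\epsilon}$ (clone plus ancilla) with a Helstrom measurement.
\item Pass to a single index by pigeonhole.
\item Parametrize correct, balanced protocols as $\sigma_0=c\proj{0}$, $\sigma_1=\proj{\alpha}$, $\Omega_0=\proj{\alpha^\perp}$, $\Omega_1=c\proj{1}$.
\item At $\epsilon=1/4$, pinch onto $\mathrm{span}\{\ketV{\Id},\ketV{Z}\}$ and $\mathrm{span}\{\ketV{X},\ketV{Y}\}$. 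This gives Eve's bias $\ge\frac43c\sin^2\alpha\ge\frac43c\sin^2\alpha-\frac16(1-c)^2$, and the right-hand side is exactly Bob's bias.
\end{enumerate}
The parameters $c,\alpha$ encode precisely the outcome-dependent filtering your argument cannot see. The $P=B$ case then follows from the symmetry of this parametrization.

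\textbf{Achievability and a side remark.} Your achievability half matches the paper's (six-state, twirl, $x=1-\frac23\epsilon$). Compute $y$ directly rather than importing literature curves: curves stated for Eve's Shannon information, where she keeps more than one bit per round, cross below QBER $1/6$. Helstrom on the depolarizing channel's environment gives $y=\frac12+\frac\epsilon3+\sqrt{\epsilon(1-\epsilon)/3}$. For $\epsilon<\frac12$, $y<x$ iff $(\epsilon-\frac14)(\epsilon-\frac34)>0$, so the crossing is exactly $1/4$. No fallback family is needed, and the paper's highly non-orthogonal family in fact has $\Upsilon\to0$. Finally, intercept--resend cannot be the converse attack: entanglement-breaking qubit channels have entanglement fidelity at most $1/2$.
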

The proof of Theorem~\ref{thm-1-1} is presented in Appendix~\ref{proof-1-1}.

For now, let us consider a particular class of QKD schemes, namely prepare-and-measure protocols~\cite{shu2023asymptotically}. Let Alice be given a collection of probabilities
$
    (p_{i,a})_{\substack{i=1,\ldots,N\\ a=0,1}}, $
where \(p_{i,a}\geq 0\), together with a corresponding set of qubit states
$
    \{\rho_{i,a}\}_{\substack{i=1,\ldots,N\\a=0,1}}. $
We assume that, for every \(i\), $
    p_{i,0} + p_{i, 1} > 0 $.
In each round, Alice prepares the state \(\rho_{i,a}\) with probability \(p_{i,a}\) and sends it to Bob.
Bob is equipped with a sub-normalized measurement, $
    \{\Omega_{i,b}\}_{\substack{i=1,\ldots,N\\b=0,1}},$ that means, $ \sum_{i,b} \Omega_{i,b} \le \Id$,
where \(\Omega_{i,b}\geq 0\) for all \(i\) and \(b\), such that
$
    \Omega_{i,0} + \Omega_{i, 1} > 0
$
for every $i$.
The indices $a$ and $b$ denote the classical bit values associated with Alice's preparation choice and Bob's measurement outcome, respectively.

A prominent example of a prepare-and-measure QKD protocol is the six-state protocol \(\QQ_{6S}\) introduced in~\cite{bruss1998optimal,bechmann1999incoherent}, corresponding to the case \(N=3\). In this protocol, Alice chooses each state with equal probability, that is,
\[
    \forall_{i, a} \;\; p_{i,a}=\frac{1}{6}
\]
for all \(i\in\{1,2,3\}\) and \(a\in\{0,1\}\). The signal states are pure and are given by
\[
    \ket{\psi_{1,0}}=\ket{0}, \qquad
    \ket{\psi_{1,1}}=\ket{1},
\]
\[
    \ket{\psi_{2,0}}=\ket{+}, \qquad
    \ket{\psi_{2,1}}=\ket{-},
\]
\[
    \ket{\psi_{3,0}}=\ket{i}, \qquad
    \ket{\psi_{3,1}}=\ket{-i}.
\]
These states constitute the eigenstates of the three Pauli observables. Bob performs the corresponding measurements in the same bases, described by the POVM elements $
    \Omega_{i,b}=\frac{1}{3}\proj{\psi_{i,b}}, $
for \(i\in\{1,2,3\}\) and \(b\in\{0,1\}\).
It is worth observing that the prepare-and-measure version of the six-state protocol achieves the post-selection efficiency $\PP(S)=\frac13.$
The six-state protocol plays a distinguished role in our analysis due to the following Corollary.

\begin{corollary}\label{cor-1-2}
    For the six-state QKD protocol, we have $
        \Upsilon_{\QQ_{6S}}=\Upsilon=\frac14. $
\end{corollary}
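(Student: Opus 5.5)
Since Theorem~\ref{thm-1-1} gives $\Upsilon_{\QQ}\le\Upsilon=\frac14$ for every protocol, in particular $\Upsilon_{\QQ_{6S}}\le\frac14$. The corollary therefore reduces to the lower bound $\Upsilon_{\QQ_{6S}}\ge\frac14$. We will show that $y_{\QQ_{6S}}(\epsilon)<x_{\QQ_{6S}}(\epsilon)$ for every $\epsilon<\frac14$.

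\textbf{Step 1: computing $x_{\QQ_{6S}}(\epsilon)$.} For a channel $\Phi$ with Choi matrix $J(\Phi)$, the agreement probability of the six-state protocol, conditioned on basis agreement, averages $\bra{\psi}\Phi(\proj{\psi})\ket{\psi}$ over the six Pauli eigenstates. These states form a qubit $2$-design, so this average equals the average fidelity $\bar F(\Phi)$. We then use the standard relation $\bar F=(2F_e+1)/3$, where $F_e=\frac14\braV{\Id}J(\Phi)\ketV{\Id}$ is the entanglement fidelity. Constraint \eqref{fidelity} gives $F_e\ge 1-\epsilon$, hence $\PP(A=B|S)\ge 1-\frac{2\epsilon}{3}$. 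Equality is attained by a depolarizing or Pauli channel with $F_e=1-\epsilon$. Thus $x_{\QQ_{6S}}(\epsilon)=1-\frac{2\epsilon}{3}$, which stays above $\frac12$ for $\epsilon<\frac34$. In QBER terms the error rate is $Q=\frac{2\epsilon}{3}$.

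\textbf{Step 2: bounding $y_{\QQ_{6S}}(\epsilon)$.} We must bound Eve's optimal guessing probability over all single-round attacks that keep $\PP(A=B|S)\ge 1-Q$. We will invoke the known optimal individual (incoherent) attack on the six-state protocol~\cite{bruss1998optimal,bechmann1999incoherent}: because Eve may delay her measurement until after basis reconciliation, her optimal guessing probability is
\[
\PP(A=E|S)=\tfrac12+\sqrt{Q(1-Q)}.
\]
This equals the cloner-based optimum, derived from the symmetric phase-covariant/universal cloner and the Helstrom bound on Eve's conditional states. Symmetrization over the Pauli group and random bit flipping lets us restrict to Pauli-covariant attacks, which is what reduces the optimization to this closed form. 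We also note that the formal parametrization of Appendix~\ref{proof-1-1}, used for Theorem~\ref{thm-1-1}, can presumably be specialized here, giving the same value directly.

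\textbf{Step 3: comparing the two quantities.} The condition $y<x$ becomes $\frac12+\sqrt{Q(1-Q)}<1-Q$, that is, $Q(1-Q)<(\frac12-Q)^2$. This simplifies to $2Q^2-2Q+\frac14>0$, i.e.\ $Q<\frac{2-\sqrt2}{4}$. With $Q=\frac{2\epsilon}{3}$ this would give $\epsilon<\frac{3(2-\sqrt2)}{8}\approx 0.22$, which is below $\frac14$. So either the paper's $y$ uses a weaker attack model, or my identification of Eve's optimum is too strong. The binding constraint in the theorem, $\epsilon=\frac14$, corresponds to $Q=\frac16$, where we need $y<\frac56$.

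\textbf{Main obstacle.} Step 2 is the crux: we must determine the exact maximal $\PP(E=P|S)$ under the constraint, in the paper's own attack parametrization. I would first redo the optimization using that parametrization together with Pauli-twirl symmetry. This reduces Eve's attack to a Pauli-diagonal channel $(1-\epsilon',\epsilon_x,\epsilon_y,\epsilon_z)$ with a purification, and Eve's guessing probability can then be computed via Helstrom on the pair of conditional states. I would then check whether the threshold coincides with the point $\epsilon=\frac14$ at which the Pauli channel becomes entanglement-breaking ($F_e=\frac34$ is not that point, so this requires care). If that check matches the upper bound from Theorem~\ref{thm-1-1}, the equality $\Upsilon_{\QQ_{6S}}=\frac14$ follows.
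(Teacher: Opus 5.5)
\textbf{There is a genuine gap in Step~2, so the lower bound $\Upsilon_{\QQ_{6S}}\ge\frac14$ is never proved.} Your reduction to that lower bound is correct, and so is Step~1: it agrees with the paper's $x=1-q_\epsilon/2$ with $q_\epsilon=\frac43\epsilon$.

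The formula $\frac12+\sqrt{Q(1-Q)}$ is the optimal individual attack on BB84~\cite{fuchs1997optimal}, not on the six-state protocol. In BB84 only the $Z$ and $X$ bases are monitored. Eve can therefore use the asymmetric Pauli channel $p_I=(1-Q)^2$, $p_X=p_Z=Q(1-Q)$, $p_Y=Q^2$. In the six-state protocol that attack causes error $2Q(1-Q)$ in the $Y$ basis, so it is not admissible at average error $Q$. Your Step~3 threshold $\epsilon<\frac{3(2-\sqrt2)}{8}$ comes from quoting the wrong bound. It is not evidence that the paper uses a weaker attack model. As written, you never show $y_{\QQ_{6S}}(\epsilon)<x_{\QQ_{6S}}(\epsilon)$ for $\epsilon<\frac14$, which is all the corollary adds beyond Theorem~\ref{thm-1-1}. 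Also, $\epsilon=\frac14$ is not the entanglement-breaking point; that occurs at $F_e=\frac12$, i.e.\ $\epsilon=\frac12$.

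Your closing plan is the right one, and it is what the paper does. The paper's proof of the corollary notes that the $N=1$ computational-basis protocol certifying $\Upsilon\ge\frac14$ in Appendix~\ref{proof-1-1} becomes $\QQ_{6S}$ after averaging over the 24 Cliffords. The missing computation runs as follows.

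Twirl Eve's attack over the Clifford group, with Eve remembering the group element. This preserves $\PP(A=B|S)$ and does not lower her guessing probability. A Pauli twirl alone may leave $p_X,p_Y,p_Z$ unequal, which would need an extra concavity step. After the Clifford twirl, Bob's channel is depolarizing, $J_q=(1-q)\projV{\Id}+\frac q2\Id_4$, and by your Step~1 the constraint is $q\le\frac43\epsilon$.

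For a fixed channel Eve does best holding its purification. Her conditional states are block diagonal according to whether Bob's bit flipped. Helstrom then gives $\frac12+\sqrt{p_Ip_Z}+\sqrt{p_Xp_Y}$ with $p_I=1-\frac{3q}4$ and $p_X=p_Y=p_Z=\frac q4$. Writing $Q=\frac q2$, this is $\frac12+\frac14\bigl(q+\sqrt{4q-3q^2}\bigr)=\frac12+\frac Q2+\frac12\sqrt{Q(2-3Q)}$.

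This expression is increasing in $q$, so $y=\frac12+\frac\epsilon3+\sqrt{\epsilon(1-\epsilon)/3}$. The condition $y<1-\frac{2\epsilon}3$ reduces to $(\epsilon-\frac14)(\epsilon-\frac34)>0$, i.e.\ $\epsilon<\frac14$. At $Q=\frac16$ both sides equal $\frac56$.

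The threshold is the point $q=\frac13$ (shrinking factor $\frac23$) reached by the optimal universal $1\to2$ cloner. There Eve can hold a copy as good as Bob's.
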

The proof of Corollary~\ref{cor-1-2} is presented in Appendix~\ref{proof-1-2}.

Several notable QKD protocols have been proposed, including BB84, B92, and E91. These protocols can be compared using our $\epsilon$-metric by computing $\Upsilon_\QQ$ via semidefinite programming (SDP)~\cite{vandenberghe1996semidefinite}.
To optimize this problem, we used the
\texttt{Julia}
programming language~\cite{julia} and SDP optimization via SCS
solver~\cite{ocpb:16, scs} with absolute convergence tolerance $10^{-5}$. The code is available on GitHub~\cite{repository}.
To determine whether $y_\QQ(\epsilon) < x_\QQ(\epsilon)$, we formulate two SDP programs based on the proof of Theorem~\ref{thm-1-1} presented in Appendix~\ref{proof-1-1}: SDP~\ref{sdp-1} for computing $x_\QQ(\epsilon)$ and SDP~\ref{sdp-2} for computing $y_\QQ(\epsilon)=\min_{P=A,B} y_{\QQ,P}(\epsilon)$. Owing to the monotonicity of the problem, the threshold value $\Upsilon_\QQ$ can then be efficiently determined using a standard bisection method.
\begin{table}[!t]
    \centerline{\underline{SDP program for $x_\QQ(\epsilon)$ -- primal problem}}
    \begin{equation*}
        \begin{split}
            \text{minimize:}\quad &
            \tr\left( J \sum_{\substack{i=1,\ldots,N                             \\ a = 0,1}} p_{i,a} \overline{\rho_{i,a}} \otimes \Omega_{i,a}\right)
            \\[2mm]
            \text{subject to:}\quad
                                  & \tr\left( J \sum_{\substack{i=1,\ldots,N     \\ a,b = 0,1}} p_{i,a} \overline{\rho_{i,a}} \otimes \Omega_{i,b}\right)=1, 	\\
                                  & \braV{\Id}J\ketV{\Id}\ge\tr(J)(2-2\epsilon), \\
                                  & \tr_2(J) \propto \Id,                        \\
                                  & J \ge 0,                                     \\[2mm]
            \text{variable:}\quad & J \in \mathcal{M}_{4\times4}(\mathbb{C})
        \end{split}
    \end{equation*}
    \captionsetup{name=SDP}
    \caption{Semidefinite program for computing $x_{\mathcal{Q}}$. }
    \label{sdp-1}
\end{table}

\begin{table}[!t]
    \centerline{\underline{SDP program for $y_{\QQ,P}(\epsilon)$ -- primal problem}}
    \begin{equation*}
        \begin{split}
            \text{maximize:}\quad  &
            \sum_{\substack{i=1,\ldots,N                                                                                                                 \\ a,b = 0,1}} p_{i,a} \tr\left( E_{i,p(a,b)} \overline{\rho_{i,a}} \otimes \Omega_{i,b}\right)
            \\[2mm]
            \text{subject to:}\quad
                                   & \sum_{\substack{i=1,\ldots,N                                                                                        \\ a,p = 0,1}} p_{i,a} \tr\left( E_{i,p} \overline{\rho_{i,a}} \otimes \Omega_{i,a}\right) \ge x_\QQ(\epsilon), \\
                                   & \sum_{\substack{i=1,\ldots,N                                                                                        \\ a,b,p = 0,1}} p_{i,a} \tr\left( E_{i,p} \overline{\rho_{i,a}} \otimes \Omega_{i,b}\right) = 1, \\
                                   & \sum_{p=0,1} E_{i,p} = \sum_{p=0,1} E_{j,p} \quad \mathrm{for} \quad i,j=1,\ldots,N,                                \\
                                   & \sum_{\substack{i=1,\ldots,N                                                                                        \\ p = 0,1}} \tr_2(E_{i,p}) \propto \Id, \\
                                   & E_{i,p} \ge 0 \quad \mathrm{for} \quad i=1,\ldots,N \quad \mathrm{and} \quad p=0,1,                                 \\[2mm]
            \text{variables:}\quad & E_{i,p} \in \mathcal{M}_{4\times4}(\mathbb{C}) \quad \mathrm{for} \quad i=1,\ldots,N \quad \mathrm{and} \quad p=0,1
        \end{split}
    \end{equation*}
    \captionsetup{name=SDP}
    \caption{Semidefinite program for computing $y_{\mathcal{Q}, P}$. The function $p(a,b)$ equals $p(a,b)=a$ if $P=A$ and $p(a,b)=b$ if $P=B$.}
    \label{sdp-2}
\end{table}

Using SDP \ref{sdp-1} and SDP \ref{sdp-2}, we numerically evaluate the functions $x_{\mathcal Q}(\epsilon)$ and $y_{\mathcal Q}(\epsilon)$ for several QKD protocols.
Figure \ref{fig-1} shows the comparison between the key agreement probability of Alice and Bob $x_{\mathcal Q}(\epsilon)$ and the maximum key agreement probability achievable by Eve $y_{\mathcal Q}(\epsilon)$ obtained from the SDP optimization. The black line represents the security boundary: points below this line correspond to a positive asymptotic secret key rate, while the intersection of each curve with the diagonal determines the maximal tolerable noise level $\Upsilon_{\mathcal Q}$. The dots on each curve indicate increments of the noise parameter $\epsilon$ by $0.01$.
As we can see, the results reveal a clear hierarchy in noise resilience between the well-known QKD protocols. The six-state protocol achieves the highest threshold, $\Upsilon_{6S}=0.25$, followed by BB84 with $\Upsilon_{\mathrm{BB84}}=0.146$. In contrast, E91 and B92 tolerate significantly lower noise levels, with thresholds of $0.0541$ and $0.038$, respectively.

\begin{figure}[!t]
    \centering
    \includegraphics[width=\columnwidth]{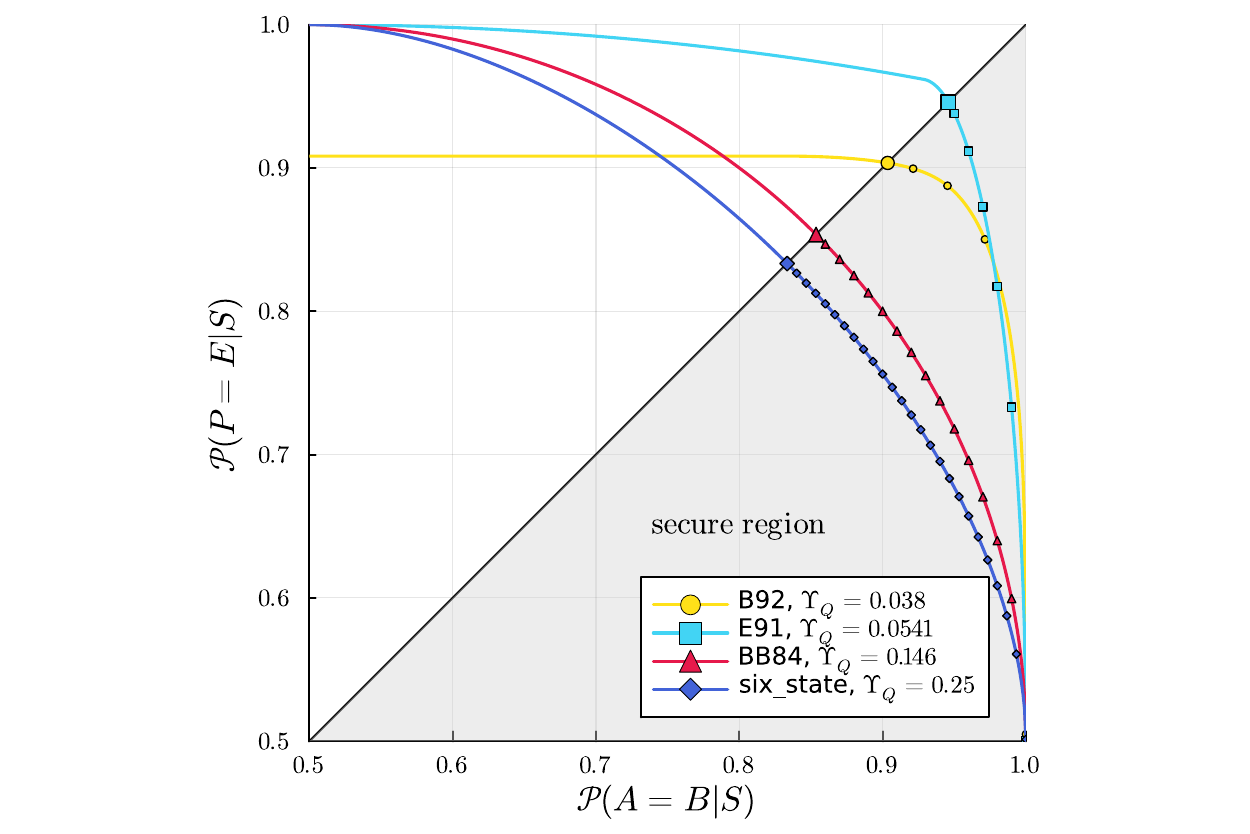}
    \caption{Comparison between the key agreement probability of Alice and Bob and the maximum key agreement probability achievable by Eve for several QKD protocols. The gray field indicates the secured region where $\mathcal{P}(A=B|S) > \mathcal{P}(A=E|S)$. The black diagonal line denotes the security boundary $x_{\mathcal{Q}}(\epsilon) = y_{\mathcal{Q}}(\epsilon)$. Markers correspond to increments of $\epsilon$ by $0.01$ starting with $\epsilon=0$ for $\mathcal{P}(A=B|S)=1$. The larger markers represent the crossing points.}
    \label{fig-1}
\end{figure}

\subsection{QBER Threshold}\label{sub-qber}
In this subsection, we provide an example of the  prepare-and-measure QKD protocol $\QQ^S$ that can tolerate high levels of empirical errors measured in generated bitstrings, yet for which $\Upsilon_{\QQ^S}$ is low.

Let us define a family of QKD protocols $\QQ^S(n,m)$ where $n,m \in \mathbb{N}$ that, for $i=1,\ldots,24$, are defined by
\begin{align*}
    p_{i,0}      & = \frac{m}{24(m+1)}, \qquad p_{i,1} = \frac{1}{24(m+1)}, \\
    \rho_{i,0}   & = U_i\proj{0}U_i^\dagger,                                \\
    \rho_{i,1}   & = U_i[\cos(\pi/(2n)), \sin(\pi/(2n))]^\top               \\
                 & \quad\, [\cos(\pi/(2n)), \sin(\pi/(2n))] U_i^\dagger,    \\
    \Omega_{i,0} & = \frac{1}{M} U_i[-\sin(\pi/(2n)), \cos(\pi/(2n))]^\top  \\
                 & \quad\, [-\sin(\pi/(2n)), \cos(\pi/(2n))] U_i^\dagger,   \\
    \Omega_{i,1} & = \frac{m}{M} U_i\proj{1}U_i^\dagger,
\end{align*}
where $(U_i)_{i=1,\ldots,24}$ is a collection of 24 single-qubit Clifford unitary matrices \cite{Bennett1999Nonlocality} and $M$ is chosen to satisfy $\|\sum_{i,b} \Omega_{i,b} \|_\infty = 1$.

Observe that in this protocol Alice sends the bit $'0'$ $m$ times more frequently than the bit $'1'$. The states are pairwise orthogonal for $n=1$ and are getting closer to each other when $n \to \infty$. Bob performs unambiguous state discrimination with his measurement. Therefore, we have $\PP(A=B|S)=1$ without interference. Moreover, he post-processes received bits and accepts $m$ times more frequently bit $'1'$, than $'0'$. That is an important feature to ensure that $\PP(A=E|S) = \PP(B=E|S) = 1/2$ without interference. Finally, observe that $\QQ^S(1,1) = \QQ_{6S}$. Here, we are only interested in the relation between QBER and $y_\QQ$. Hence, we will modify our equation accordingly
\begin{equation}
    y^*_{\mathcal{Q}}(x) =
    \min_{P\in \{A,B\} }
    \max_E
    \left\{
    \PP(P=E|S):
    \PP(A=B|S)\ge x
    \right\}.
\end{equation}
The intersection point $x=y^*_\QQ(x)$ will define the threshold for acceptable QBER.

Figure~\ref{fig-2} presents examples of the $\mathcal{Q}^S(n,m)$ protocols obtained via SDP optimization. We observe that $\mathcal{Q}^S(n,m)$ for $n,m>1$ can tolerate a higher QBER than the six-state protocol $\QQ_{6S}=\mathcal{Q}^S(1,1)$. However, its maximal tolerable noise parameter decreases. This observation motivates a further investigation of the asymptotic behavior of the tolerable QBER in the regime $m,n \to \infty$, which is seen in Proposition \ref{prop-2-1}.
\begin{figure}[!t]
    \centering
    \includegraphics[width=\columnwidth]{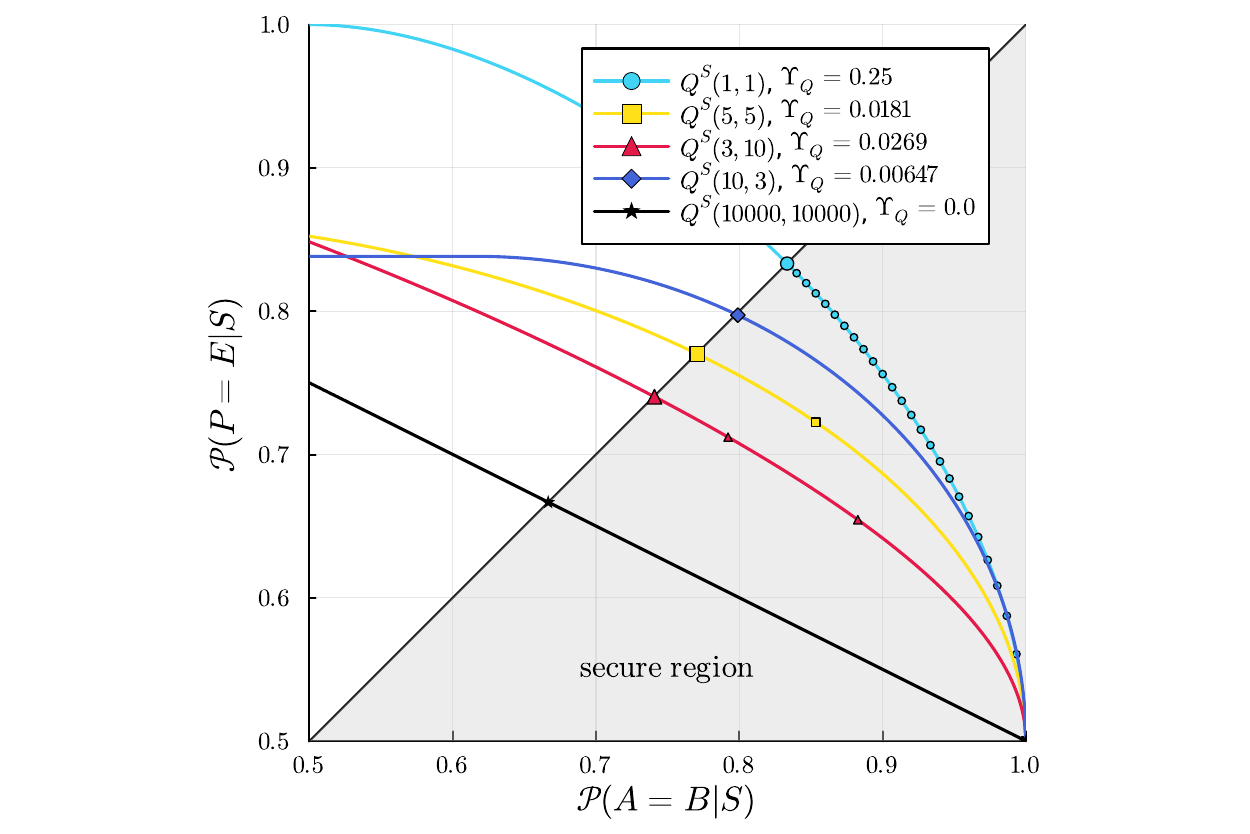}
    \caption{Comparison between the key agreement probability of Alice and Bob and the maximum key agreement probability achievable by Eve for several QKD protocols $\mathcal{Q}^S(n,m)$ to compare with the six-state protocol $\QQ_{6S}=\mathcal{Q}^S(1,1)$.  The gray field indicates the secured region where $\mathcal{P}(A=B|S) > \mathcal{P}(A=E|S)$.  Again, the black diagonal line denotes the security boundary $x_{\mathcal Q}(\epsilon)=y_{\mathcal Q}(\epsilon)$.    Markers correspond to increments of $\epsilon$ by $0.01$ starting with $\epsilon=0$ for $\mathcal{P}(A=B|S)=1$. The larger markers represent the crossing points.}
    \label{fig-2}
\end{figure}

\begin{proposition}\label{prop-2-1}\sloppy
    Let $x_{n,m}$ be the limit of acceptable QBER for $\QQ^S(n,m)$, that is $x_{n,m}=y^*_{\QQ^S(n,m)}(x_{n,m})$. We have
    \begin{equation}
        \lim_{n,m \to \infty} x_{n,m} = \frac{2}{3}.
    \end{equation}
    Moreover, it holds
    \begin{equation}
        \lim_{n,m \to \infty} \Upsilon_{\QQ^S(n,m)} = 0.
    \end{equation}
\end{proposition}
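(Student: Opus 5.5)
The proof has two independent parts: the QBER limit $2/3$ and the vanishing of $\Upsilon_{\QQ^S(n,m)}$.

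\emph{QBER limit.} I would first reduce the structure of $\QQ^S(n,m)$ to a single effective scheme. The 24 Clifford rotations $U_i$ are applied uniformly. So, by averaging any attack over the Clifford group (a unitary 2-design acting on the channel part of Eve's attack), we may restrict Eve without loss of generality to attacks covariant under the Clifford twirl. In particular, the effective Alice--Bob channel is a depolarizing channel $\Delta_q(\rho)=(1-q)\rho+q\Id/2$.

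For the agreement probability I would compute $\PP(A=B|S)$ under $\Delta_q$. The acceptance probabilities are:
\begin{itemize}
\item correct outcomes: weight $m\cdot\sin^2(\pi/2n)\cdot 1/M$ for $a=0$, and $1\cdot\sin^2(\pi/2n)\cdot m/M$ for $a=1$;
\item noise-induced errors: weight of order $q\cdot m/M$ for each wrong pair.
\end{itemize}
Since $\sin^2(\pi/2n)\to 0$, a tiny noise $q$ already yields $\PP(A=B|S)$ far from 1. This is exactly why QBER is large while $\epsilon$ is small.

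I would then parametrize Eve by the SDP \ref{sdp-2} restricted to covariant $E_{i,p}$, which reduces it to a few scalar variables. The natural candidate optimal attack is an intercept-resend / cloning-type attack in which Eve performs the same unambiguous discrimination as Bob and, on an inconclusive result, resends a state. Taking the limit $n,m\to\infty$ term by term, I expect the crossing equation $x=y^*(x)$ to reduce to an equation whose solution is $2/3$. Heuristically, in the limit the accepted events come from three equally weighted mechanisms: genuine detections, noise on the $0$ branch, and noise on the $1$ branch. Eve learns the genuine detections and one of the two noise types, so she wins on $2/3$ of the events while Bob agrees with Alice on $2/3$.

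The main obstacle is establishing optimality, i.e.\ that no attack does better for Eve. For this I would construct a dual feasible point of SDP~\ref{sdp-2} in the limit. The plan is to guess the dual variables from the symmetric primal solution and verify positivity of the resulting $4\times4$ operators asymptotically.

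\emph{Noise threshold.} To show $\Upsilon_{\QQ^S(n,m)}\to0$, I would use the characterization of $x_\QQ(\epsilon)$ from SDP~\ref{sdp-1}, with $J$ the Choi matrix of a channel $\epsilon$-close to the identity.

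Take the depolarizing channel $\Delta_q$ with fidelity $1-3q/4$, so $\epsilon=3q/4$. The error events then have weight $\Theta(q\,m)$ compared with the correct weight $\Theta(m\sin^2(\pi/2n))$. Consequently $x_{\QQ^S}(\epsilon)\le 1/(1+c\,q/\sin^2(\pi/2n))$ for some constant $c>0$, which drops below $2/3$ once $q\gg\sin^2(\pi/2n)$.

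By the first part (or by explicitly exhibiting the intercept-resend attack, which is cheaper), $y\ge x$ at that point. Hence $\Upsilon_{\QQ^S(n,m)}=O(\sin^2(\pi/2n))\to 0$. For this direction it suffices to exhibit a single channel and a single attack, so no dual certificate is needed.
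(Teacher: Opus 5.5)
Your reduction to Clifford-covariant attacks, with $\sum_e E_{i,e}$ of depolarizing form, matches the paper. Beyond that, however, the limit $2/3$ is not proved, and the quantitative picture you build on is wrong. Under $\Delta_q$, drop common factors and write $s=\sin(\pi/(2n))$. The accepted weights are $m[(1-q)s^2+q/2]$ for each of $(a,b)=(0,0)$ and $(1,1)$, but $m^2q/2$ for $(0,1)$ and only $q/2$ for $(1,0)$: the bias in $p_{i,a}$ and the bias in $\Omega_{i,b}$ multiply the same error. This asymmetry is the whole mechanism. The crossing sits at $q\approx 2s^2/m\approx\pi^2/(2mn^2)$. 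There the accepted rounds split into thirds $(0,0)$, $(1,1)$, $(0,1)$. To leading order Eve's $2/3$ does not come from learning the genuine detections, since the two signal states are almost parallel. It comes from the bias of the accepted bits, $\PP(A=0|S)=\PP(B=1|S)\approx 2/3$. Your heuristic, with errors of order $qm$ on both branches, is internally inconsistent: with those weights Alice and Bob would agree on only $1/3$ of accepted rounds. The lower bound is in fact cheap. Depolarizing at $q_x$ and guessing the majority bit already gives $y^*(x)\ge x$ whenever $(m-1)q_x/2\ge(1-q_x)s^2$, hence $x_{n,m}\ge 2m^2/(3m^2+1)$. The USD intercept--resend attack you propose is not what drives the threshold. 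It is entanglement breaking, so at agreement near $2/3$ it could only be applied to a tiny fraction of rounds.

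The real content is the matching upper bound: no attack beats the majority guess to leading order. You defer this to a dual certificate of SDP~\ref{sdp-2} that is never constructed. Covariance also does not reduce the $E_{i,p}$ to a few scalars, since each remains a full $4\times4$ block below $J_q$. The paper needs no dual. After twirling, $E_{i,e}=\sqrt{J_q}M_{i,e}\sqrt{J_q}$, and the Holevo--Helstrom theorem gives Eve's exact optimum $\frac12\PP(S)+\frac12\|\sqrt{J_q}((\overline{\sigma_0}-\overline{\sigma_1})\otimes(\Omega_0+\Omega_1))\sqrt{J_q}\|_1$. Writing $\sqrt{J_q}=a\projV{\Id}+b\Id$ and $M=(\sigma_0-\sigma_1)(\Omega_0+\Omega_1)$, the triangle inequality bounds this trace norm by $2ab\|\ketbraV{M}{\Id}\|_1+b^2\|\sigma_0-\sigma_1\|_1\|\Omega_0+\Omega_1\|_1$. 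The $a^2$ term vanishes because $\tr(\sigma_a\Omega_{a\oplus1})=0$ and $\tr(\sigma_0\Omega_0)=\tr(\sigma_1\Omega_1)$. The $b^2$ term reproduces the majority-guess value $\approx qm^2/2$, and the cross term is smaller by a factor $O(m^{-1/2})$. This gives $q_{x_0}=\frac{\pi^2}{2mn^2}(1+o(1))$ and $x_0\to 2/3$; the pinching with $\Pi_{IZ},\Pi_{XY}$ gives the other side.

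For $\Upsilon$ your conclusion is right, but the bound $x\le 1/(1+cq/s^2)$ ignores the $mq/2$ contribution of noise to correct outcomes and fails once $q\gg ms^2$. The clean argument is this: for fixed $q=4\epsilon/3$, $\PP_q(A=B|S)\to 0$ as $n,m\to\infty$. So $x_{\QQ^S(n,m)}(\epsilon)$ is clipped to $1/2$, while $y\ge 1/2$ trivially. No appeal to the first part or to a specific attack is needed.
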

The proof of Proposition~\ref{prop-2-1} is presented in Appendix~\ref{proof-2-1}.

\section{Conclusion and discussion}\label{sec:discuss}

In this work, we analyzed the noise resilience of QKD protocols secured against independent attacks with one-way classical post-processing.
To this end, we introduced an $\epsilon$-based framework that quantifies the implementation noise through the fidelity between the identity channel and the effective quantum channel $\Phi_\epsilon$. The channel $\Phi_\epsilon$ describes the cumulative effect of all implementation imperfections. Within this framework, we defined the fundamental noise threshold $\Upsilon$, representing the maximum implementation noise compatible with secure key generation.

The main analytical result is presented in Theorem~\ref{thm-1-1}. We determined the noise threshold for QKD protocols protected against independent attacks using one-way communication. Furthermore, we showed that the six-state protocol achieves this optimal value. Next, we developed a general semidefinite programming (SDP) framework for evaluating the noise resilience of QKD protocols (SDP \ref{sdp-1} and SDP \ref{sdp-2}). This framework was then applied to several representative protocols, including BB84, B92, and E91, allowing us to compare their robustness using the proposed $\epsilon$-metric.
We also provided a numerical example of a family of prepare-and-measure QKD protocols that can tolerate high levels of empirical errors measured in the generated bitstrings, while exhibiting a low value of $\Upsilon$. This example serves as the main motivation for introducing our framework and demonstrates the distinction between empirical error rates and physical implementation noise. Finally, for this family of protocols, we investigated the asymptotic behavior of the tolerable QBER presented in Proposition \ref{prop-2-1}.

Several directions remain open for future investigation. An immediate extension would be to replace the fidelity-based definition of $\epsilon$ with alternative measures, such as the diamond norm \cite{Watrous2009} or other operationally motivated metrics. Since both of those measures are unitary invariant, we suspect that the optimality of the six-state protocol persists under unitary invariant measures.

Another natural direction is to extend the present analysis beyond independent attacks. In particular, it would be valuable to investigate the proposed framework for protocols employing advantage distillation, as well as under collective and coherent attacks. Finally, extending our framework to measurement-device-independent \cite{Lo2012} and fully device-independent QKD scenarios \cite{Vazirani2014, Zapatero2023} could provide further insight into the fundamental limitations imposed by implementation noise in more general cryptographic settings.

\appendices
\section{Proof of Theorem~\ref{thm-1-1}}\label{proof-1-1}
\begin{proof}
    Let us consider a general QKD protocol $\QQ$, as defined in Section~\ref{sec:problem} and depicted in Figure~\ref{fig-proof-1-1-1}.

    \begin{figure}[htp!]
        \centering
        \includegraphics[width=0.5\linewidth]{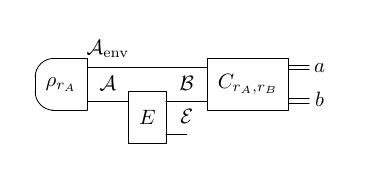}
        \caption{The general setup of a single round of the QKD protocol $\QQ$ defined in Section~\ref{sec:problem}. Alice prepares the state $\rho_{r_A}$ on registers $\mathcal{A} \otimes \mathcal{A}_{\mathrm{env}}$, Eve attacks the transmitted qubit with the channel $E$, and finally
            the parties Alice and Bob implement the LOCC measurement $C_{r_A,r_B}$ producing the key bits $a$ and $b$.}
        \label{fig-proof-1-1-1}
    \end{figure}

    Alice and Bob begin by generating their secret random variables, $r_A$ and $r_B$, respectively. Alice then prepares a quantum state $\rho_{r_A}$ defined on the composite system $\mathcal{A} \otimes \mathcal{A}_{\mathrm{env}}$, where $\mathcal{A}=\mathbb{C}^2$ is a qubit register whose state is sent to Bob, and $\mathcal{A}_{\mathrm{env}}$ is Alice's private quantum register. The state acting on $\mathcal{A}$ is intercepted by Eve, who performs the channel $E$ and sends a qubit state on register $\mathcal{B}$ to Bob. The state on Eve's register $\mathcal{E}$  is kept for later usage to generate her bit. At the final stage by local operations and classical communication Alice and Bob implement the LOCC measurement \cite{Bennett1999Nonlocality} $C_{r_A,r_B}$ based on their random variables. As a result, the round is either successful $S$ or discarded. If it succeeds, then Alice measures key bit $a$ and Bob key bit $b$.

    Observe that the value $\Upsilon_\QQ$ is independent of the value of $\PP(S)$, therefore, w.l.o.g. we can arbitrarily reduce the success probability of any $\QQ$ to simplify the problem. In particular, the LOCC measurement $C_{r_A, r_B}$ is a separable measurement \cite{Bennett1999Nonlocality},  and it can be simulated by performing independent local measurements followed by post-selection \cite{watrous2018theory} resulting only in the drop of $\PP(S)$. Therefore, the distribution of $a$ for Alice depends only on $r_A$ and hence, she does not need the $\mathcal{A}_{\mathrm{env}}$ register. Based on that, $\QQ$ will be a QKD scheme, where Alice produces states $\rho_{r_A}$ on $\AA$ and sends them to Bob. His local measurement (depending on $r_B$) can be combined into a single POVM $\Omega$ with a larger number of labels. Let $o_B$ be the outcome of the measurement $\Omega$. After the quantum communication stage, Alice and Bob perform classical post-processing to reconcile their outcomes. The goal of this step is to implement a deterministic decision function that takes $r_A$ and $o_B$ and maps it to the keys $a,b$, respectively,  if the round is successful $S$. Recall that we can reduce $\PP(S)$ by implementing the communication stage in a probabilistic way, so that for each $r_A$ Alice accepts only a particular, unique set of outputs $o_B$.

    \begin{figure}[hpt!]
        \centering
        \includegraphics[width=0.75\linewidth]{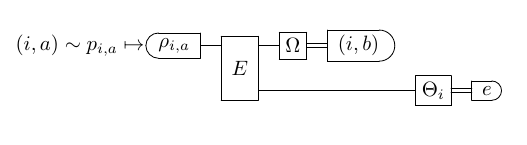}
        \caption{The simplified prepare-and-measure form of the QKD protocol $\QQ$ obtained in the proof of Theorem~\ref{thm-1-1}. Alice sends the state $\rho_{i,a}$ to Bob with probability $p_{i,a}$, Bob performs the measurement $\{\Omega_{i,b}\}_{i,b}$, and Eve applies the channel
            $E$ followed by the conditional measurement $\Theta_{i}$, performed after the index $i$ is publicly announced.}
        \label{fig-proof-1-1-2}
    \end{figure}

    Within this simplification, which can be seen in Figure \ref{fig-proof-1-1-2}, we define $\QQ$ to contain Alice's probabilities $
        (p_{i,a})_{\substack{i=1,\ldots,N\\a=0,1}}, $
    where \(p_{i,a}\geq 0\), together with a corresponding set of qubit states
    $ \{\rho_{i,a}\}_{\substack{i=1,\ldots,N\\a=0,1}}. $
    We assume that for every $i$ we have  $\sum_{a=0}^{1} p_{i,a} > 0 $.
    Bob is equipped with a sub-normalized measurement $
        \{\Omega_{i,b}\}_{\substack{i=1,\ldots,N\\b=0,1}}, $
    where \(\Omega_{i,b}\geq 0\) for all \(i\) and \(b\), and such that
    $
        \sum_{b=0}^{1}\Omega_{i,b} > 0
    $ for every \(i\).  Alice and Bob accept the round whenever their measurement outcomes correspond to the same index $i$.
    Meanwhile, Eve intercepts the quantum communication with channel $E$ and then by hearing label $i$ performs the conditional measurement $\Theta_i$ to obtain her bit $e$. In total, actions of Eve can be described by a quantum network \cite{Bisio2011} via operators $E_{i,e} \ge 0$ for $i=1,\ldots,N$ and $e=0,1$ such that $\sum_e E_{i,e} = \sum_e E_{j,e}$ for any $i,j$ and $\tr_2\left( \sum_e E_{1,e} \right) = \Id$.

    Let us analyze $x_\QQ(\epsilon)$ and $y_\QQ(\epsilon)$ within the provided framework. For $x_\QQ$ we have
    \begin{equation}
        \begin{split}
            \PP(S)            & = \sum_{\substack{i=1,\ldots,N \\a,b=0,1}} p_{i,a} \tr\left( J(\Phi_\epsilon) \overline{\rho_{i,a}} \otimes \Omega_{i,b}\right);\\
            \PP(A=B \wedge S) & = \sum_{\substack{i=1,\ldots,N \\a=0,1}} p_{i,a} \tr\left( J(\Phi_\epsilon) \overline{\rho_{i,a}} \otimes \Omega_{i,a}\right).
        \end{split}
    \end{equation}
    The SDP problem for $x_\QQ$ is obtained through
    \begin{equation}
        \begin{split}
            x_\QQ(\epsilon)
            = & \min \left\{\frac{\PP(A=B \wedge S)}{\PP(S)}: \frac14 \braV{\Id}J(\Phi_\epsilon)\ketV{\Id}\ge(1-\epsilon),  \tr_2(J(\Phi_\epsilon)) = \Id, J(\Phi_\epsilon) \ge 0 \right\}                     \\
            = & \min \left\{\frac{\PP(A=B \wedge S)}{\PP(S)}: \braV{\Id}J(\Phi_\epsilon)\ketV{\Id}\ge\tr(J(\Phi_\epsilon))(2-2\epsilon),  \tr_2(J(\Phi_\epsilon)) \propto \Id, J(\Phi_\epsilon) \ge 0 \right\} \\
            = & \min \Bigg\{\PP(A=B \wedge S):                                                                                                                                                                 \\& \PP(S)=1, \braV{\Id}J(\Phi_\epsilon)\ketV{\Id}\ge\tr(J(\Phi_\epsilon))(2-2\epsilon),  \tr_2(J(\Phi_\epsilon)) \propto \Id, J(\Phi_\epsilon) \ge 0 \Bigg\}.
        \end{split}
    \end{equation}

    For $y_\QQ(\epsilon)=\min\{y_{\QQ,A}(\epsilon),y_{\QQ,B}(\epsilon)\}$ we define
    \begin{equation}
        \begin{split}
            \PP(S)            & = \sum_{\substack{i=1,\ldots,N \\a,b,e=0,1}} p_{i,a} \tr\left( E_{i,e} \overline{\rho_{i,a}} \otimes \Omega_{i,b}\right);\\
            \PP(A=B \wedge S) & = \sum_{\substack{i=1,\ldots,N \\a,e=0,1}} p_{i,a} \tr\left( E_{i,e} \overline{\rho_{i,a}} \otimes \Omega_{i,a}\right);\\
            \PP(P=E \wedge S) & = \sum_{\substack{i=1,\ldots,N \\a,b=0,1}} p_{i,a} \tr\left( E_{i,p(a,b)} \overline{\rho_{i,a}} \otimes \Omega_{i,b}\right),\\
        \end{split}
    \end{equation}
    where $p(a,b) = a$ if $P=A$ and $p(a,b) = b$ if $P=B$.
    Then, similarly to $x_\QQ(\epsilon)$ we can write
    \begin{equation}
        \begin{split}
            y_{\QQ,P}(\epsilon)
            = & \max \Bigg\{\frac{\PP(P=E \wedge S)}{\PP(S)}: \\& \frac{\PP(A=B \wedge S)}{\PP(S)} \ge x_\QQ(\epsilon), E_{i,e} \ge 0, \sum_e E_{i,e} = \sum_e E_{j,e}, \tr_2\left( \sum_e E_{1,e} \right) = \Id \Bigg\} \\
            = & \max \Bigg\{\frac{\PP(P=E \wedge S)}{\PP(S)}: \\& \frac{\PP(A=B \wedge S)}{\PP(S)} \ge x_\QQ(\epsilon), E_{i,e} \ge 0, \sum_e E_{i,e} = \sum_e E_{j,e}, \tr_2\left( \sum_e E_{1,e} \right) \propto \Id \Bigg\} \\
            = & \max \Bigg\{\PP(P=E \wedge S):                \\& \PP(S)=1, \PP(A=B \wedge S) \ge x_\QQ(\epsilon), E_{i,e} \ge 0, \sum_e E_{i,e} = \sum_e E_{j,e}, \tr_2\left( \sum_e E_{1,e} \right) \propto \Id \Bigg\}.
        \end{split}
    \end{equation}

    The next important step is to observe the monotonicity of the problem. If $\epsilon' < \epsilon$, then naturally $x_\QQ(\epsilon') \ge x_\QQ(\epsilon)$. Moreover, it implies $y_\QQ(\epsilon') \le y_\QQ(\epsilon)$. Hence, for each $\QQ$ if $x_\QQ(\epsilon) > y_\QQ(\epsilon)$, then $x_\QQ(\epsilon') > y_\QQ(\epsilon')$. Let us define a new QKD protocol $\QQ'$ based on $\QQ$. Let $(U_j)_{j=1,\ldots,24}$ be a collection of distinct, single-qubit elements of the Clifford group, which form a unitary $3$-design \cite{Zhu2017, Gross2007}. For $\QQ'$ we define the probability value $p_{i,j,a} = \frac{1}{24}p_{i,a}$, quantum state $\rho_{i,j,a} = U_j \rho_{i,a} U_j^\dagger$, and quantum measurement $\Omega_{i,j,b} = \frac{1}{24} U_{j} \Omega_{i,b} U_j^\dagger$. Let us take $\epsilon$ such that $y_{\QQ,P}(\epsilon) < x_\QQ(\epsilon)$. Then, for $\QQ'$ we have
    \begin{equation}
        \begin{split}
            \PP(S)            & = \sum_{\substack{i=1,\ldots,N              \\j=1,\ldots,24\\a,b=0,1}} p_{i,j,a} \tr\left( J(\Phi_\epsilon) \overline{\rho_{i,j,a}} \otimes \Omega_{i,j,b}\right) \\& = \frac{1}{24} \sum_{\substack{i=1,\ldots,N\\a,b=0,1}} p_{i,a} \tr\left( \sum_{j=1,\ldots,24} (U_j^\top \otimes U_j^\dagger)J(\Phi_\epsilon)(\overline{U_j} \otimes U_j) \overline{\rho_{i,a}} \otimes \Omega_{i,b}\right); \\
            \PP(A=B \wedge S) & = \frac{1}{24} \sum_{\substack{i=1,\ldots,N \\a=0,1}} p_{i,a} \tr\left( \sum_{j=1,\ldots,24} (U_j^\top \otimes U_j^\dagger)J(\Phi_\epsilon)(\overline{U_j} \otimes U_j) \overline{\rho_{i,a}} \otimes \Omega_{i,a}\right).
        \end{split}
    \end{equation}
    Let us now define \begin{equation}
        J(\Phi'_\epsilon) = \frac{1}{24}\sum_{j=1,\ldots,24} (U_j^\top \otimes U_j^\dagger)J(\Phi_\epsilon)(\overline{U_j} \otimes U_j),
    \end{equation}  and observe that $\Phi'_\epsilon$ satisfies the conditions for the optimization domain. Therefore, $x_{\QQ'}(\epsilon) \ge x_\QQ(\epsilon)$. Similarly, for $y_{\QQ',P}(\epsilon)$ we get
    \begin{equation}
        \begin{split}
            \PP(S)            & = \frac{1}{24}\sum_{\substack{i=1,\ldots,N \\a,b,e=0,1}} p_{i,a} \tr\left( \sum_{j=1,\ldots,24} (U_j^\top \otimes U_j^\dagger)E_{i,j,e}(\overline{U_j} \otimes U_j) \overline{\rho_{i,a}} \otimes \Omega_{i,b}\right);\\
            \PP(A=B \wedge S) & = \frac{1}{24}\sum_{\substack{i=1,\ldots,N \\a,e=0,1}} p_{i,a} \tr\left( \sum_{j=1,\ldots,24} (U_j^\top \otimes U_j^\dagger)E_{i,j,e}(\overline{U_j} \otimes U_j) \overline{\rho_{i,a}} \otimes \Omega_{i,a}\right);\\
            \PP(P=E \wedge S) & = \frac{1}{24}\sum_{\substack{i=1,\ldots,N \\a,b=0,1}} p_{i,a} \tr\left(  \sum_{j=1,\ldots,24} (U_j^\top \otimes U_j^\dagger)E_{i,j,p(a,b)}(\overline{U_j} \otimes U_j)  \overline{\rho_{i,a}} \otimes \Omega_{i,b}\right).
        \end{split}
    \end{equation}
    If we define $E'_{i,e} = \frac{1}{24}\sum_{j=1,\ldots,24} (U_j^\top \otimes U_j^\dagger)E_{i,j,e}(\overline{U_j} \otimes U_j)$ we see that it belongs to the optimization domain with restricted structure and an even stronger condition $\PP(A=B|S) \ge x_{\QQ'}(\epsilon) \ge x_\QQ(\epsilon)$. Therefore, we have $y_{\QQ',P}(\epsilon)\le y_{\QQ,P}(\epsilon)$.

    We have shown that $\Upsilon_{\QQ'} \ge \Upsilon_{\QQ}$. For now, we refer to $\QQ$  as the QKD protocol averaged over the Clifford group. Next, we translate $x_\QQ(0) = 1$ and $y_{\QQ, P}(0) = 1/2$ into the definition of $\QQ$. For the noiseless scenario $\PP(A=B|S)=1$,  we have to have $\tr(p_{i,a}\rho_{i,a}\Omega_{i,a \oplus 1}) = 0$ for all $i,a$. For $y_{\QQ, P}(0)$, the inequality $\PP(A=B|S)\ge x_\QQ(0)=1$ requires $\sum_e E_{i,e} = \projV{\Id}$. To ensure  $\PP(P=E|S)=1/2$, the a priori distribution of the key bit must be balanced. That means we have $\tr(p_{i,0}\rho_{i,0}\Omega_{i,0})=\tr(p_{i,1}\rho_{i,1}\Omega_{i,1})$ for all $i$. To evaluate $\Upsilon$,  we observe that
    \begin{equation}
        \Upsilon_\QQ = \max\left(\sup\left\{\epsilon\in[0,1]:y_{\mathcal{Q},A}(\epsilon)< x_{\mathcal{Q}}(\epsilon)\right\}, \sup\left\{\epsilon\in[0,1]:y_{\mathcal{Q},B}(\epsilon)< x_{\mathcal{Q}}(\epsilon)\right\}\right).\end{equation}
    To compute $x_\QQ(\epsilon)$, we use the unitary $t$-design property
    \begin{equation}
        \frac{1}{24}\sum_{j=1,\ldots,24} (U_j^\top \otimes U_j^\dagger)J(\overline{U_j} \otimes U_j) = \int_{U(2)} (U \otimes \overline{U}) J (U \otimes \overline{U})^\dagger dU = (1-q) \projV{\Id_2} + q \Id_4/2,
    \end{equation}
    where $q \in [0, 4/3]$. The averaged noise channel must satisfy \begin{equation}
        \frac14 \braV{\Id}((1-q) \projV{\Id_2} + q \Id_4/2)\ketV{\Id} \ge 1-\epsilon,
    \end{equation} which translates to $\frac43 \epsilon \ge q$. Hence, $q \in [0,4/3 \epsilon]$. The objective value is monotonically decreasing with $q$, hence the solution is reached for $q_\epsilon =4/3 \epsilon$. Similarly for $y_{\QQ, P}(\epsilon)$, we show that Eve's attacks must satisfy \begin{equation}
        \sum_{e} E_{i,e} = (1-q) \projV{\Id_2} + q \Id_4/2
    \end{equation}  due to unitary symmetry. The inequality $\PP(A=B|S) \ge x_\QQ(\epsilon)$ leads to the optimization domain $q \in [0,4/3\epsilon]$. Let $J_{q_\epsilon'}$ be the optimal Choi operator for $q_\epsilon' \in [0,q_\epsilon]$, then $E_{i,e} = \sqrt{J_{q_\epsilon'}} M_{i,e} \sqrt{J_{q_\epsilon'}}$, where $(M_{i,e})_e$ is a conditional POVM with $\sum_e M_{i,e} = \Id$.
    The function $\PP(P=E \wedge S)$ is maximized for the measurements satisfying the Holevo–Helstrom theorem \cite{helstrom1969quantum, watrous2018theory}, explicitly
    \begin{equation}
        \begin{split}
              & \PP(A=E \wedge S)                     \\ =& \sum_{\substack{i=1,\ldots,N\\a=0,1}} \tr\left( M_{i,a} p_{i,a} \sqrt{J_{q_\epsilon}} \left(\overline{\rho_{i,a}} \otimes \sum_{b=0,1} \Omega_{i,b} \right)\sqrt{J_{q_\epsilon}}\right)\\
            = & \frac12  \sum_{\substack{i=1,\ldots,N \\a,b=0,1}} p_{i,a} \tr\left( J_{q_\epsilon} \overline{\rho_{i,a}} \otimes  \Omega_{i,b} \right)
            + \frac12 \sum_{i=1,\ldots,N}
            \left\| \sqrt{J_{q_\epsilon}} \left( (p_{i,0}\overline{\rho_{i,0}} -  p_{i,1}\overline{\rho_{i,1}})\otimes \sum_{b=0,1} \Omega_{i,b} \right)\sqrt{J_{q_\epsilon}}\right\|_1.
        \end{split}
    \end{equation}
    The analogous equation holds for $\PP(B=E \wedge S)$.

    We have evaluated $x_\QQ(\epsilon) = \PP_{q_\epsilon}(A=B|S)$ and $y_{\QQ,P}(\epsilon) = \PP_{q_\epsilon'}(P=E|S)$, where probabilities $\PP_{q}$ are defined with the notation of the channel $J_{q}$ for both events, $A=B$ and $P=E$. Here, the parameter $N$ defines the number of pairs of strategies. In total, due to symmetrization there are $24N$ pairs for $\QQ$. As $J_{q}$ commutes with $U_j \otimes \overline{U_j}$ the results are multiplied on both sides by $24\PP_{q}$. Hence, in that convention we can manipulate $N$ freely.

    To show that $\Upsilon = 1/4$, first, let us define the following QKD with $N=1$ given by $p_{1,a} = 1/2$, $\rho_{1,a} = \proj{a}$ and $\Omega_{1,b}= \proj{b}$. Observe that $\PP(S) = 1$ for all values of $q$. In particular $x_\QQ(\epsilon) = \PP_{q_\epsilon}(A=B|S) = \PP_{q_\epsilon}(A=B \wedge S) = 1 - \frac{q_\epsilon}{2} = 1 - \frac{2}{3}\epsilon$. Due to symmetry of the problem we compute only $\PP_{q_\epsilon'}(A=E|S)$. Here,
    \begin{equation}
        \begin{split}
            \PP_q(A=E \wedge S) & = \frac12  \sum_{a,b=0,1} \frac12 \tr\left( J_{q} \proj{a} \otimes  \proj{b} \right)
            + \frac14 \left\| \sqrt{J_{q}} \left( Z \otimes \Id \right)\sqrt{J_{q}}\right\|_1                          \\
                                & =\frac12 + \frac14 (q+\sqrt{4q-3q^2}),
        \end{split}
    \end{equation}
    where $q \le \frac{4}{3}\epsilon$. If $\epsilon \le \frac{1}{2}$, then $\PP_q(A=E \wedge S)$ is increasing with $q$, hence it is saturated for $q = q_\epsilon$ and equals $y_{\mathcal{Q},A}(\epsilon) = \PP_{q_\epsilon}(A=E \wedge S) = \frac12 + \frac13 \epsilon + \frac{1}{\sqrt{3}}\sqrt{\epsilon}\sqrt{1-\epsilon}$. Then, $y_{\mathcal{Q},A}(\epsilon)< x_{\mathcal{Q}}(\epsilon)$ if and only if $\epsilon < \frac14$. Therefore, for this protocol we obtain
    \begin{equation}
        \Upsilon_\QQ = \frac{1}{4},
    \end{equation}
    which establishes the lower bound $\Upsilon \ge \frac14$. It remains to prove the matching upper bound, that is, $\Upsilon_\QQ \le \frac{1}{4}$ for any $\QQ$. For any $\QQ$ we lower-bound $y_{\QQ,P}(\epsilon) \ge \PP_{q_\epsilon}(P=E|S)$. Therefore, we have
    \begin{equation}
        \begin{split}
            \Upsilon_\QQ & = \max_{P=A,B}\left(\sup\left\{\epsilon\in[0,1]:y_{\mathcal{Q},P}(\epsilon)< x_{\mathcal{Q}}(\epsilon)\right\}\right) \\ &\le \max_{P=A,B}\left(\sup\left\{\epsilon\in[0,1]: \PP_{q_\epsilon}(P=E \wedge S) < \PP_{q_\epsilon}(A=B \wedge S)\right\}\right)\\ &= \max_{P=A,B}\left(\sup\left\{\epsilon\in[0,1]: \sum_{i=1,\ldots,N} \PP_{q_\epsilon}(P=E \wedge S \wedge i) < \sum_{i=1,\ldots,N} \PP_{q_\epsilon}(A=B \wedge S \wedge i)\right\}\right).
        \end{split}
    \end{equation}
    For $\QQ$ let $P$ be the person for whom the maximum of $\Upsilon_\QQ$ is reached. Let $(\epsilon_n)_{n \in \mathbb{N}}$ be a sequence for which $\epsilon_n \to \Upsilon_\QQ$ and \begin{equation}
        \sum_{i=1,\ldots,N} \PP_{q_{\epsilon_n}}(P=E \wedge S \wedge i) < \sum_{i=1,\ldots,N} \PP_{q_{\epsilon_n}}(A=B \wedge S \wedge i).
    \end{equation} From the Pigeonhole principle,  there is $i_0$ for which $\PP_{q_{\epsilon_{n_k}}}(P=E \wedge S \wedge i_0) < \PP_{q_{\epsilon_{n_k}}}(A=B \wedge S \wedge i_0)$, where $\epsilon_{n_k}$ is an infinite subsequence. Therefore, to optimize $\Upsilon_\QQ$ we take $N=1$. Let $\sigma_a = p_a \rho_a$. We have
    \begin{equation}
        \begin{split}
            \PP_{q_\epsilon}(A=B \wedge S) & = \sum_{a=0,1} \tr\left( J_{q_\epsilon} \overline{\sigma_{a}} \otimes \Omega_{a}\right)                                              \\
            \PP_{q_\epsilon}(A=E \wedge S) & = \frac12  \sum_{a,b=0,1} \tr\left( J_{q_\epsilon} \overline{\sigma_{a}} \otimes  \Omega_{b} \right)
            + \frac12 \left\| \sqrt{J_{q_\epsilon}} \left( (\overline{\sigma_{0}} -  \overline{\sigma_{1}})\otimes \sum_{b=0,1} \Omega_{b} \right)\sqrt{J_{q_\epsilon}}\right\|_1 \\
            \PP_{q_\epsilon}(B=E \wedge S) & = \frac12  \sum_{a,b=0,1} \tr\left( J_{q_\epsilon} \overline{\sigma_{a}} \otimes  \Omega_{b} \right)
            + \frac12 \left\| \sqrt{J_{q_\epsilon}} \left( \sum_{a=0,1} \overline{\sigma_{a}} \otimes (\Omega_{0} - \Omega_1) \right)\sqrt{J_{q_\epsilon}}\right\|_1.
        \end{split}
    \end{equation}
    Due to the symmetry of the simplified problem with respect to $P$ we get
    \begin{equation}\label{eq-proof-1-1-proj}
        \begin{split}
             & \Upsilon_\QQ \le \sup\left\{\epsilon\in[0,1]: \right.                                                                                                                                                                                                                                                                \\
             & \left. \frac12  \sum_{a,b=0,1} \tr\left( J_{q_\epsilon} \overline{\sigma_{a}} \otimes  \Omega_{b} \right)
            + \frac12 \left\| \sqrt{J_{q_\epsilon}} \left( (\overline{\sigma_{0}} -  \overline{\sigma_{1}})\otimes \sum_{b=0,1} \Omega_{b} \right)\sqrt{J_{q_\epsilon}}\right\|_1 < \sum_{a=0,1} \tr\left( J_{q_\epsilon} \overline{\sigma_{a}} \otimes \Omega_{a}\right)\right\}                                                   \\
             & = \sup\left\{\epsilon\in[0,1]: \left\| \sqrt{J_{q_\epsilon}} \left( (\overline{\sigma_{0}} -  \overline{\sigma_{1}})\otimes \sum_{b=0,1} \Omega_{b} \right)\sqrt{J_{q_\epsilon}}\right\|_1 < \tr\left( J_{q_\epsilon} (\overline{\sigma_{0}} - \overline{\sigma_{1}}) \otimes (\Omega_{0} - \Omega_1)\right)\right\}
        \end{split}
    \end{equation}
    Recall the assumptions $\tr(\sigma_a \Omega_{a \oplus 1}) = 0$ and $\tr(\sigma_0 \Omega_0) = \tr(\sigma_1 \Omega_1)$. Based on $U \otimes \overline{U}$ symmetry in Eq.~\eqref{eq-proof-1-1} we can take the following parametrization of $\QQ$: $\sigma_0 = c\proj{0}$, $\sigma_1 = \proj{\alpha}$, $\Omega_0 = \proj{\alpha^\perp}$, $\Omega_1 = c\proj{1}$, where $c > 0$ and $\alpha \in [0,\pi/2]$, and $\ket{\alpha} = [\cos(\alpha), \sin(\alpha)]$, $\ket{\alpha^\perp} = [-\sin(\alpha), \cos(\alpha)]$.

    Let $\Psi(X) = \Pi_{IZ} X \Pi_{IZ} + \Pi_{XY} X \Pi_{XY}$ be a quantum channel, where $\Pi_{IZ}$ is the projector onto the subspace spanned by $\ketV{\Id}, \ketV{Z}$ and $\Pi_{XY}$ is the projector onto the subspace spanned by $\ketV{X}, \ketV{Y}$. By the data processing inequality we get for $M = \left\| \sqrt{J_{q_\epsilon}} \left( (\overline{\sigma_{0}} -  \overline{\sigma_{1}})\otimes \sum_{b=0,1} \Omega_{b} \right)\sqrt{J_{q_\epsilon}}\right\|_1$ that $\|M\|_1 \ge \|\Pi_{IZ}M\Pi_{IZ}\|_1+\|\Pi_{XY}M\Pi_{XY}\|_1$. It translates to the inequality
    \begin{equation}\label{eq-proof-1-1}
        \begin{split}
             & \Upsilon_\QQ \le \sup\left\{\epsilon\in[0,1]: \|\Pi_{IZ}M\Pi_{IZ}\|_1+\|\Pi_{XY}M\Pi_{XY}\|_1 < \tr\left( J_{q_\epsilon} (\overline{\sigma_{0}} - \overline{\sigma_{1}}) \otimes (\Omega_{0} - \Omega_1)\right)\right\}.
        \end{split}
    \end{equation}
    Let us fix $\epsilon = 1/4$. We will show that for any $\alpha,c$ we have $\|\Pi_{IZ}M\Pi_{IZ}\|_1+\|\Pi_{XY}M\Pi_{XY}\|_1 \ge \tr\left( J_{q_\epsilon} (\overline{\sigma_{0}} - \overline{\sigma_{1}}) \otimes (\Omega_{0} - \Omega_1)\right)$ or equivalently that
    \begin{equation}
        \begin{split}
             & \frac{1}{3}\sqrt{\sin^4(\alpha)\cos^4(\alpha)+9c^2\sin^4(\alpha)}+\frac{1}{6}\sqrt{(2\sin^2(\alpha)\cos^2(\alpha)+c^2-1)^2+4c^2\sin^4(\alpha)} \\
             & \ge \frac{4}{3}c \sin^2(\alpha) - \frac{1}{6}(1-c)^2.
        \end{split}
    \end{equation}
    Indeed, direct estimation reveals
    \begin{equation}
        \begin{split}
             & \frac{1}{3}\sqrt{\sin^4(\alpha)\cos^4(\alpha)+9c^2\sin^4(\alpha)}+\frac{1}{6}\sqrt{(2\sin^2(\alpha)\cos^2(\alpha)+c^2-1)^2+4c^2\sin^4(\alpha)} \\ \ge& \frac{1}{3}\sqrt{9c^2\sin^4(\alpha)}+\frac{1}{6}\sqrt{4c^2\sin^4(\alpha)} =  \frac{4}{3}c \sin^2(\alpha) \\ \ge& \frac{4}{3}c \sin^2(\alpha) - \frac{1}{6}(1-c)^2,
        \end{split}
    \end{equation}
    which proves the inequality $\Upsilon \le 1/4$.
\end{proof}

\section{Proof of Corollary~\ref{cor-1-2}}\label{proof-1-2}
\begin{proof}
    Investigating Proof~\ref{proof-1-1} backwards we get that to optimize $\Upsilon_\QQ$ over $\QQ$ we take $p_a = 1/2$, $\rho_a = \proj{a}$ and $\Omega_b = \proj{b}$. We have $N=1$, but we average the input states $\rho$ and effects $\Omega$ over Clifford unitary operations $(U_j)_{j=1,\ldots,24}$. Direct computations reveal we end up with three bases $\rho_{1,a} = \proj{a}, \rho_{2,a} = \proj{+_a}$ and $ \rho_{3,a} = \proj{i_a}$ (the same for the measurement $\Omega$). Each of the bases appears $8$ times (up to bit permutation). Therefore, the optimal protocol is equivalent to $\QQ_{6S}$.
\end{proof}

\section{Proof of Proposition~\ref{prop-2-1}}\label{proof-2-1}
\begin{proof}
    We apply Proof~\ref{proof-1-1}. The optimal Eve attack is given by $J_q = (1-q)\projV{\Id} + q/2 \Id$. The parameter $q = q_x$ is set to satisfy $\PP_{q_x}(A=B|S) = x$. In this case, we have  $y^*_\QQ(x) = \PP_{q_x}(A=E|S)$. Let us upper-bound this value
    \begin{equation}
        \begin{split}
             & \PP_{q_x}(A=E \wedge S)                                                                                                                                            \\ & =\frac12  \sum_{a,b=0,1} \tr\left( J_{q_x} \overline{\sigma_{a}} \otimes  \Omega_{b} \right)
            + \frac12 \left\| (a\projV{\Id} + b\Id) \left( (\overline{\sigma_{0}} -  \overline{\sigma_{1}})\otimes \sum_{b=0,1} \Omega_{b} \right)(a\projV{\Id} + b\Id)\right\|_1 \\
             & \le \frac12  \sum_{a,b=0,1} \tr\left( J_{q_x} \overline{\sigma_{a}} \otimes  \Omega_{b} \right)
            + \frac12 \left( 2ab \left\| \ketbraV{M}{\Id}\right\|_1 + b^2 \|\sigma_0 - \sigma_1 \|_1 \|\Omega_0 + \Omega_1\|_1\right),
        \end{split}
    \end{equation}
    where $\sqrt{J_{q_x}} = a\projV{\Id} + b\Id$ and $M = (\sigma_{0} -  \sigma_{1})(\Omega_0 + \Omega_{1})$. For a fixed $n,m$ there is $x_0$, such that it is equal to the upper-bounded $\PP_{q_{x_0}}(A=E | S)$. Then, it holds that
    \begin{equation}
        \begin{split}
            2ab \left\| \ketbraV{M}{\Id}\right\|_1 + b^2 \|\sigma_0 - \sigma_1 \|_1 \|\Omega_0 + \Omega_1\|_1 = \tr\left( J_{q_{x_0}} (\sigma_{0} - \sigma_{1}) \otimes (\Omega_{0} - \Omega_1)\right).
        \end{split}
    \end{equation}
    In the regime $n,m \to \infty$, we solve the equation above to express the behavior of $q_{x_0}$, which is $q_{x_0} = \frac{\pi^2}{2mn^2}(1+o(1))$. That implies
    \begin{equation}
        x_{n,m} \le x_0 = \PP_{q_{x_0}}(A=B|S) = \frac{2}{3}(1+o(1)).
    \end{equation}
    To show the lower bound for $x_{n,m}$ we use the inequality $\|M\|_1 \ge \|\Pi_{IZ}M\Pi_{IZ}\|_1+\|\Pi_{XY}M\Pi_{XY}\|_1$. For it, we repeat the same reasoning which leads to $q_{x_0} = \frac{\pi^2}{2mn^2}(1+o(1))$ and eventually $x_{n,m} \ge x_0 = \PP_{q_{x_0}}(A=B|S) = \frac{2}{3}(1+o(1))$.

    To show that $ \lim_{n,m \to \infty} \Upsilon_{\QQ^S(n,m)} = 0$ we fix an arbitrary $\epsilon > 0$. Then, we fix $q=4/3 \epsilon$ and compute $\PP_{q}(A=B|S) = \frac{2(1-q)m \sin^2(\pi/(2n))+qm}{2(1-q)m \sin^2(\pi/(2n)) + q/2(1+m)^2} \to 0$ for $n,m \to \infty$. Therefore, the most destructive noise channel will have parameter $q' \in (0, q)$ for which $x_{\QQ^S(n,m)}(\epsilon) = \PP_{q'}(A=B|S) = 1/2$ for $n \ge n_0$ and $m \ge m_0$. We have shown that $\Upsilon_{\QQ^S(n,m)} \le \epsilon$ for $n \ge n_0, m\ge m_0$, which finishes the proof.
\end{proof}

\section*{Acknowledgment}
The authors would like to thank Norbert Luchowski for fruitful discussions.

\section*{Funding}
A. B\'ilek is supported by the Ministry of Education, Youth and Sports of the Czech Republic through the e-INFRA CZ (ID:90254). P. Lewandowska is supported by the Ministry of Education, Youth and Sports of the Czech Republic through the e-INFRA CZ (ID:90254), with the financial support of the European Union under the REFRESH -- Research Excellence For REgion Sustainability and High-tech Industries project number CZ.10.03.01/00/22\_003/0000048 via the Operational Programme Just Transition.  R. Kukulski is supported by the European Union under the Quantum error correction codes enhanced by reinforcement learning dedicated for the Ising model-based optimization, contract nr. 01906/2025/RRC via the Operational Programme Just Transition and Moravian-Silesian Region.
    {\L}. Pawela and Z. Pucha{\l}a are supported by the National Science Center (NCN), Poland, under Project Opus No. 2022/47/B/ST6/02380.

\section*{Author Contributions}
All authors contributed equally to this work, including the preparation and writing of the manuscript. R. Kukulski and P. Lewandowska conducted the theoretical security analysis, A. B\'ilek and {\L}. Pawela performed the optimization tasks and generated the numerical results, Z. Pucha{\l}a and {\L}. Pawela defined the research objectives and coordinated project administration.

\section*{Data Availability}
The code that supports the findings of this study is openly available on GitHub~\cite{repository}.

\bibliographystyle{unsrt}
\bibliography{qkd_tit}

\end{document}